\tikzset{snake it/.style={decorate, decoration=snake}}
\newtheorem{theorem}{Theorem}
\newtheorem{definition}{Definition}
\newtheorem{proposition}{Proposition}
\newcommand{\af}{\alpha}
\DeclareMathOperator{\tr}{Tr}
\newcommand{\ketbra}[2]{\vert #1 \rangle \langle #2 \vert }
\newcommand{\sketbra}[1]{\vert #1 \rangle \langle #1 \vert }
\title{Kochen and Specker's view on functional relations conflicts with the collapse postulate
}
\author{Alisson Tezzin\footnote{alisson.tezzin@usp.br}}
\affil{\small\textit{Department of Mathematical Physics, Institute of Physics,
University of São Paulo
\\
R. do Matão 1371, São Paulo 05508-090, SP, Brazil}}
\date{\today}
\begin{document}
\maketitle
\begin{abstract}
    A key ingredient of the Kochen-Specker theorem is the so-called functional composition principle, which asserts that hidden states must ascribe values to observables in a way that is consistent with all functional relations between them. This principle is motivated by the assumption that, like functions of observables in classical mechanics, a function $g(A)$ of an observable $A$ in quantum theory is simply a logically possible observable derived from $A$, and that measuring $g(A)$ consists in measuring $A$ and post-processing the resulting value via $g$. In this paper we show that, under suitable conditions, this reasonable assumption is in conflict with the collapse postulate. We then discuss possible solutions to this conflict, and we argue that the most justifiable and less radical one consists in adapting the collapse postulate by taking measurement contexts into account. This dependence on contexts arises for precisely the same reason why noncontextual hidden-variable models are ruled out by the Kochen-Specker theorem, namely that an observable $A$ can typically be written as a function $A=g(B)=h(C)$ of noncommuting observables $B,C$.
    
\end{abstract}

\section{Introduction}\label{sec: introduction}

In classical mechanics, the \textit{pure state} of a  system of $N$ particles 
is completely specified by a pair $(q,p)$, where $q \in \mathbb{R}^{3N}$ encodes the positions of all particles with respect to a previously fixed reference frame and $p \in \mathbb{R}^{3N}$ encodes their momenta \cite{arnol2013mathematical}. This state is a truth-maker, which means that, as Chris Isham and Andreas Döring put it, it specifies ``the way things are at a particular moment in time'' \cite{doring2010thing, IshamDoeringI}. This is why it is assumed that any \textit{observable} $\underline{A}$ of the system  can be represented as a Borel function $A$ on its state  space $\Lambda$:\footnote{For the sake of simplicity, we consider $\Lambda = \mathbb{R}^{3N} \times \mathbb{R}^{3N}$} this function simply assigns to each pure state $\lambda  \equiv (q,p) \in \Lambda$ the value $\underline{A}$ assumes when the pure state of the system is $\lambda$ \cite{doring2010thing, IshamDoeringI, KS1967}.
Obvious examples of relevant observables in classical mechanics are position ($\underline{Q}$) and momentum ($\underline{P}$), which are represented by the functions $Q(q,p) \doteq q$ and $P(q,p) \doteq p$ respectively --- or, to restrict the discussion to real-valued functions, $Q^{(k)}_{i}$ and $P^{(k)}_{i}$, where $Q^{(k)}_{i}$ returns the component $i\in \{1,2,3\}$ of the position of the $k$-th particle, and analogously for $P^{(k)}_{i}$. All other observables are, in the last instance, theoretical constructs which are convenient for a proper description of the system, like ``total energy'' ($\underline{E}$), ``potential energy'' ($\underline{E_{P}}$) or ``angular momentum'' ($\underline{L}$), and many of these observables are derived as functions of others, like ``square of total energy'' ($\underline{E^{2}}$). It means that there are \textit{functional relations} between observables in classical mechanics, and the fact that these observables are constructed within the theory guarantees that these functional relations coexists in harmony with the theoretical framework. To be precise, let $A_{1},\dots,A_{n}: \Lambda \rightarrow \mathbb{R}$ ($n=1$ is a particular case) be measurable functions representing observables $\underline{A_{1}}, \dots, \underline{A_{n}}$ respectively, and let $g: \mathbb{R}^{n} \rightarrow \mathbb{R}$ be any Borel function. We ``define'' an observable $\underline{g(A_{1},\dots,A_{n})}$ as the observable associated with the function $g \circ (A_{1},\dots,A_{n})$. Note that, by construction, any observable in classical mechanics is a function of $\underline{P}$ and $\underline{Q}$, given that any  function $A$ on $\Lambda$ satisfies $A(q,p) = (A \circ (Q,P))(q,p)$. Now let $V_{\lambda}$ be the valuation defined by a state $\lambda \in \Lambda$, that is to say, $V_{\lambda}(H) \doteq H(\lambda)$ for any Borel function $H$. Then
\begin{align}
    V_{\lambda}(g(A_{1},\dots,A_{n})) =\label{eq: first functional relation} g(V_{\lambda}(A_{1},\dots,A_{n}))
\end{align}
for every Borel functions $A_{1},\dots,A_{n}:\Lambda \rightarrow\mathbb{R}$ and $g: \mathbb{R}^{n} \rightarrow\mathbb{R}$. In particular, given $A,B: \Lambda \rightarrow \mathbb{R}$, $g:\mathbb{R} \rightarrow \mathbb{R}$ and $\af,\beta \in \mathbb{R}$, 
\begin{align}
    V_{\lambda} (\af A + \beta B) &=\label{eq: linearity} \af V_{\lambda}(A) + \beta V_{\lambda}(B)
    \\
    V_{\lambda}(AB) &=\label{eq: homomorphism} V_{\lambda}(A)V_{\lambda}(B)
    \\
    V_{\lambda}(g(A)) &=\label{eq: functional relations} g(V_{\lambda}(A)).
\end{align}
It is also important to note that the value $g(V_{\lambda}(A))$ of $H \doteq g(A)$ is actually independent from the functional relation between $A$ and $H$, which means that, if we have $g(A) = H = g(B)$, then, for any state $\lambda$, $g(V_{\lambda}(A)) = V_{\lambda}(H) = g(V_{\lambda}(A))$. This is a straightforward consequence of the fact that $H$ itself is a function on $\Lambda$, and thus has a definite value in each state $\lambda$. Hence, since the pure state of a physical system in classical mechanics is well defined at any given time, all observables  have definite values at all times, and these values are compatible with all functional relations between them.



In the search for ``hidden variables'' completing the description of microscopic systems provided by quantum theory, valuations and functional relations between observables have played a major role \cite{doring2010thing, KS1967, isham1998topos}. In fact, if a hidden variable is like a classical pure state, it must ascribe a  definite value to each observable of the system. Thus, in a quantum system  represented by a Hilbert space $H$, if all bounded selfadjoint operators in $H$ represent observables  of this system, a hidden state will define a mapping $V: \mathcal{B}(H)_{\text{sa}} \rightarrow \mathbb{R}$, where $\mathcal{B}(H)_{\textit{sa}}$ is the collection of all bounded selfadjoint operators on $H$.  $\mathcal{B}(H)_{\text{sa}}$ is a real vector space, and it is embedded in the C*-algebra $\mathcal{B}(H)$ of all bounded operators, so, like the collection of all Borel functions on $\Lambda$, it is endowed with an algebraic structure. It thus seems reasonable, at first glance, to assume that the valuation $V$ defined by a hidden state must be compatible with this vector space structure, that is to say, that it must satisfy equation \ref{eq: linearity}. Notwithstanding von Neumann had proved that, under suitable conditions, no such valuation can exist 
\cite{landsman2017foundations}, it has been pointed out by many authors, specially John Bell \cite{norsen2017foundations, MerminBell}, that this linearity assumption is unfounded, since it conflicts with quantum predictions when noncommuting observables are taken into account \cite{norsen2017foundations}. In face of the restrictions imposed by noncommuting operators, assuming that $V$ simply preserves functional relations between single observables, namely  that it satisfies equation  \ref{eq: functional relations}, seems to be a reasonable alternative. 
To begin with, given an observable $A$ and a function $g(A) \in \mathcal{B}(H)_{\text{sa}}$ of $A$, where $g$ is a real Borel function on the spectrum of $A$ and $g(A)$ is defined according to the Borel functional calculus, $g(A)$ and $A$ commute \cite{barata2006curso, kadison1997fundamentals}. Furthermore, since commuting observables $A,B$ can always be written as functions of a third observable $C$ \cite{isham1998topos, landsman2017foundations}, namely $A = g(C)$ and $B = h(C)$ for real Borel functions $g,h$ on $\sigma(C)$, and since the Borel functional calculus is a homomorphism \cite{barata2006curso, kadison1997fundamentals}, any valuation preserving functional relations between single observables is immediately quasi-linear \cite{landsman2017foundations}, which means that it satisfies equation \ref{eq: linearity} whenever $A$ and $B$ commutes, and it also satisfies equation \ref{eq: homomorphism} for commuting observables. It justifies the following definition. 

\begin{definition}[Valuation, \cite{isham1998topos, doring2005kochen}]\label{def: valuation} Let $H$ be a separable Hilbert space, and let $\mathcal{B}(H)_{\text{sa}}$ be the collection of all bounded selfadjoint operators in $H$. A function $V: \mathcal{B}(H)_{\text{sa}} \rightarrow \mathbb{R}$ is said to be a valuation if it satisfies the following conditions.
\begin{itemize}
    \item[(a)] The `value rule' is satisfied, i.e., for any operator $A \in \mathcal{B}(H)_{\text{sa}}$,
    \begin{align}
        V(A) \in \sigma(A),
    \end{align}
    where $\sigma(A)$ denotes the spectrum of $A$.
    \item[(b)]  The `functional composition principle is satisfied'. That is, if $B = g(A)$ for some Borel function $g: \sigma(A) \rightarrow \mathbb{R}$, 
    \begin{align}
        V(g(A)) = g(V(A)).
    \end{align}
\end{itemize}
\end{definition}

Kochen-Specker theorem \cite{KS1967}, however, implies that valuations do not exist if $\text{dim}(H) > 2$ \cite{KS1967, isham1998topos, doring2005kochen}.

The functional composition principle is important not only because it guarantees that linear combinations and products of commuting observables are preserved, but also --- and mainly --- because it is commonly assumed that functional relations between single observables in quantum theory  are analogous to functional relations in classical mechanics. Kochen and Specker themselves make it explicit in their seminal paper:

\begin{displayquote}
``Now it is clear that the observables of a theory are in fact not independent. The observable $A^{2}$ is a function of the observable $A$ and is certainly not independent of $A$. \textbf{In any theory, one way of measuring $A^{2}$ consists in measuring $A$ and squaring the resulting value. In fact, this may be used as a \textit{definition} of a function of an observable}. (...) This definition coincides with the definition of a function of an observable \textbf{in both quantum and classical mechanics.}'' (\cite{KS1967}, emphasis added).
\end{displayquote}
In the excerpt we suppressed from the quotation, Kochen and Specker precisely explain what they mean by ``a \textit{definition} of a function of observable''. In order to understand their definition, we need to go into the details of their formalism, so we turn our attention to it now. To begin with, they provide a sketch of a physical theory, which consists basically of a triple $(\mathcal{O},\mathcal{S},P)$, where $\mathcal{O}$ and $\mathcal{S}$ are nonempty sets whose elements represent, respectively, observables and states, and $P$ is a mapping which assigns, to each pair $(\rho, A) \in \mathcal{S} \times \mathcal{O}$, a probability measure $P_{\rho}( \ \cdot \ ,  A)$ on $\mathbb{R}$ (endowed with an appropriate $\sigma$-algebra). According to this prototype physical theory, given any set $\Delta \subset \mathbb{R}$ which is measurable with respect to $P_{\rho}(\ \cdot \ ,A)$, $P_{\rho}(\Delta,A)$ denotes the probability that, for a system in the state $\rho$, the measurement of $A$ yields a value lying in $\Delta$ \cite{KS1967}.

In the particular case of classical mechanics, where observables are measurable functions on a space of pure states $\Lambda$, a (not necessarily pure) state consists in a probability measure $\mu$ on $\Lambda$, and, for any measurable set $U \subset \Lambda$, $\mu(U)$  is the probability that the pure state of the system lies in $U$ \cite{KS1967} --- consequently, pure states define Dirac measures. Therefore, given any state $\mu$ and any observable $A$, $P_{\mu}( \ \cdot \ ,A)$ is simply the pushforward measure $A_{\ast}(\mu)$, which means that
\begin{align}\label{eq: pushforward}
    P_{\mu}(\Delta, A) = \mu(A^{-1}(\Delta)).
\end{align}
In quantum theory, observables of a given physical system are represented by bounded selfadjoint operators in a separable Hilbert space $H$, and states are normalized positive linear functionals on the C*-algebra $\mathcal{B}(H)$ of bounded operators.\footnote{Kochen and Specker do not restrict the definition of observable to bounded operators, as we do. Also, they take only pure states into account.} Note that, if $\text{dim}(H) < \infty$, there is a one-to-one correspondence between normalized positive linear functionals on $\mathcal{B}(H)$ and density operators, which is established by the mapping $\rho \mapsto \tr(\rho \ \cdot \ )$. Given any state $\omega$, and any observable $A$, the probability measure $P_{\omega}( \ \cdot \ , A)$ is defined by the Born rule, which means that
\begin{align}
    P_{\omega}(\Delta,A) = \omega(\chi_{\Delta}(A)),
\end{align}
where $\chi_{\Delta}(A)$ is the orthogonal projection corresponding to the subset $\Delta \cap \sigma(A)$ of $\sigma(A)$, which is defined according to the Borel functional calculus \cite{landsman2017foundations, barata2006curso, kadison1997fundamentals}. 

We can now reveal the definition we hid in the previous quotation. It goes as follows:
\begin{displayquote}
``\textbf{We define the observable $\mathbf{g(A)}$} for every observable $A$ and Borel function $g:\mathbb{R} \rightarrow \mathbb{R}$ \textbf{by the formula}
\begin{align}\label{eq: KS definition}
    P_{\rho}(\Delta, g(A)) = P_{\rho}(g^{-1}(\Delta),A)
\end{align}
for each state $\rho$. If we assume that every observable is determined by the function $P$, i.e., $P_{\rho}( \ \cdot \ ,A ) = P_{\rho}( \ \cdot \ ,B)$ for every state $\rho$ implies that $A=B$, then the formula \ref{eq: KS definition} defines the observable $g(A)$.  \textbf{This definition coincides with the definition of a function of an observable in both quantum and classical mechanics}.'' (\cite{KS1967}, emphasis added).
\end{displayquote}
It is important to mention that Kochen and Specker write $P_{g(A)\rho}$ and $P_{A\rho}$ rather than $P_{\rho}( \ \cdot \ ,g(A))$ and  $P_{\rho}( \ \cdot \ ,A)$. In any case, they conclude as follows:
\begin{displayquote}
``Thus \textbf{the measurement of a function $\mathbf{g(A)}$ of an observable $\mathbf{A}$ is independent of the theory considered --- one merely writes $\mathbf{g(\boldsymbol{\af})}$ for the value of $\mathbf{g(A)}$ if $\mathbf{\boldsymbol{\af}}$ is the measured value of $\mathbf{A}$.} The set of observables of a theory thereby acquires an algebraic structure, and the introduction of hidden variables into a theory should preserve this structure.'' (\cite{KS1967}, emphasis added).
\end{displayquote}

We can thus say that, according to Kochen and Specker's view on functional relations, an  observable $g(A)$ is the theoretical representation of a \textit{logically possible observable} derived from $A$, and that measuring $g(A)$ consists in measuring $A$ and  post-processing the resulting value via $g$ (i.e., evaluating $g$ on it). From now on, this is what we mean when we say that $g(A)$ is ``a logically possible observable representing an experimental post-processing of $A$ via $g$'', or simply ``a post-processing of $A$ via $g$''. 

In this paper we show that, as reasonable as this view on functional relations may be, it conflicts with the collapse postulate in the particular case of quantum theory. This is shown in section \ref{sec: conflict}. In the next section, we discuss how to incorporate the idea of state collapse into Kochen and Specker's framework, and in section \ref{sec: collapse} we examine the collapse postulate in quantum theory, in order to understand how the experimentalist updates the state of the system after obtaining a set $\Delta \subset \sigma(A)$ of possible outcomes in a measurement of $A$. In section \ref{sec: discussion}, we analyze possible solutions to the conflict between functional relations and the collapse postulate, and we argue that the most reasonable way of avoiding it consists in incorporating the measurement context --- which is no more than the basis the experimentalist chooses to perform the measurement --- into the collapse postulate. According to the definition we propose, the collapse due to a measurement of an observable $A$ depends on the measurement context if and only if $A$ can be written as a function of noncommuting observables, namely $A = g(B) = h(C)$, where $[B,C] \neq 0$. The existence of such observables  is precisely the reason why valuations on $\mathcal{B}(H)_{\text{sa}}$ (and consequently noncontextual hidden-variable models for $H$) do not exist \cite{isham1998topos, doring2005kochen}, so the dependence on contexts we come up with is in accordance with Kochen-Specker theorem. 

\section{The collapse postulate in  Kochen and Specker's framework}\label{sec: general collapse}

In Ref.~\cite{KS1967}, Kochen and Specker restrict their analysis to the predictions a theory provides, as their description of a physical theory makes clear. From this perspective, the fact that an observable $g(A)$ satisfies equation \ref{eq: KS definition} for every state $\rho$ and every measurable set $\Delta$ justifies their claim that $g(A)$ can be seen as the theoretical representation of a post-processing of $A$ via $g$, since the probability measure $P_{\rho}(g^{-1}( \ \cdot \ ),A)$ that appears at the right hand side of equation \ref{eq: KS definition} is precisely the probability measure describing this post-processing, i.e., for any measurable set $\Delta$, $P_{\rho}(g^{-1}(\Delta),A)$ is the probability that, for a system in the state $\rho$, an experimental post-processing of $A$ via $g$ yields a value lying in $\Delta$. However, their description of a physical theory captures only, say, half of quantum theory, since it includes the Born rule but excludes the collapse postulate. We can thus ``complete'' their framework as follows:

\begin{definition}[$\mathfrak{F}$-system]\label{def: F-system} Let $\mathfrak{F}$ be a physical theory. A  $\mathfrak{F}$-system consists in a quadruple $(\mathcal{O},\mathcal{S},P,T)$, where $\mathcal{O}$ and $\mathcal{S}$ are nonempty sets, and $P$ and $T$ are defined as follows.
\begin{itemize}
    \item[(a)] $P$ is a mapping which associates, to each pair $(\rho,A) \in \mathcal{S} \times \mathcal{O}$, a Borel measure $P_{\rho}( \ \cdot \ ,A)$ on $\mathbb{R}$. For any Borel set $\Delta$,  $P_{\rho}( \Delta,A)$ is the probability that, for a system in the state $\rho$, a measurement of $A$ yields a value lying in $\Delta$ \cite{KS1967}.
    \item[(b)] $T$ is a mapping which associates, to each \textit{measurement event} $(\Delta,A) \in \mathfrak{B}(\mathbb{R}) \times \mathcal{O}$, a mapping $T_{(\Delta,A)}: \mathcal{S} \rightarrow \mathcal{S}$, where $\mathfrak{B}(\mathbb{R})$ denotes the Borel $\sigma$-algebra on $\mathbb{R}$. For a system in the state $\rho$, $T_{(\Delta,A)}(\rho)$ represents its updated state immediately after the measurement event $(\Delta,A)$, that is to say, after the experimentalist extracts the information that some outcome in $\Delta$ (unknown to her) has been obtained in a measurement of $A$.
\end{itemize}
\end{definition}
Note that, for the sake of simplicity, we assume that $P_{\rho}( \ \cdot \ ,A)$ is always a Borel measure, and that $\Delta$ is always a Borel set. We denote by $\mathbb{E}$ the collection $\mathfrak{B}(\mathbb{R}) \times \mathcal{O}$ of all \textit{measurement events}, and each state $\rho$ trivially defines a function $P_{\rho}: \mathbb{E} \rightarrow [0,1]$. We say that $P_{\rho}(\Delta,A)$ is ``the probability of the event $(\Delta,A)$ for a system in the state $\rho$'', and we say that the measurement event $(\Delta,A)$ ``has happened'' or ``has occur'' to indicate that a measurement of $A$ has been performed and some outcome lying in $\Delta$ has been obtained.

When a measurement of $A$ is performed, an outcome $\af \in \mathbb{R}$ is obtained, whether or not the experimentalist has access to it. Therefore, saying that a measurement event $(\Delta,A)$ has happened is  equivalent to say that one, and only one, of the events $\{(\{\af\},A): \af \in \Delta\}$ has occurred. For this reason, we will distinguish between the \textit{objective event} $(\af,A) \equiv (\{\af\},A)$ and the \textit{subjective event} $(\Delta,A)$ that occur when an observable $A$ is measured; the  former is determined by the outcome (usually unknown) that is obtained, and the later encodes how much information the experimentalist extracts from the system by measuring $A$. Note that, in classical mechanics, we are usually dealing with subjective events, since we have experimental uncertainly --- $\Delta$ is usually an interval $(\af - \delta_{A}, \af + \delta_{A})$, where $\delta_{A}>0$ is the experimental error associated with the experimental apparatus. This is the case with infinite-dimensional quantum systems either, given that, in infinite-dimensional Hilbert spaces, some observables, like position and momentum, have no eigenvalues at all. Finally, note that, in definition \ref{def: F-system}, it is the subjective  event that dictates how the state of the system has to be updated. Thus, depending on the theory $\mathfrak{F}$, the update may be purely subjective, as in classical mechanics, or it may be a ``mixture'' of an objective collapse, namely the collapse determined by the objective event, and an update on the knowledge of the experimentalist about the pure state of the system; this is the case of quantum theory.

If Kochen and Specker are right in saying that equation \ref{eq: KS definition} defines $g(A)$, then, for any Borel set $\Delta$, the measurement event $(\Delta,g(A))$ must update the state of the system in precisely the same way as $(g^{-1}(\Delta),A)$, since, according to their definition, these measurement events are equivalent in every possible sense: measuring $g(A)$ consists in measuring $A$ and post-processing the resulting value via $g$, thus saying that ``an outcome in $\Delta$ has been obtained in a measurement of $g(A)$'', that is to say, saying that measurement event $(\Delta,g(A))$ has occurred, is just another way of saying that ``an outcome in $g^{-1}(\Delta)$ has been obtained in a measurement of $A$'', which means that the event $(g^{-1}(\Delta),A)$ has happened. It motivates the following definition.
\begin{definition}[Post-processing]\label{def: post-processing}
Let $\mathfrak{F}$ be a physical theory, and let $(\mathcal{O},\mathcal{S},P,T)$ be a $\mathfrak{F}$-system. Let $A$ and $B$ be observables in this system, and let $g: \mathbb{R} \rightarrow \mathbb{R}$ be a Borel function. We say that $B$ is a ``logically possible observable representing an experimental post-processing of $A$ via $g$'', or simply a ``post-processing of $A$ via $g$'', iff the following conditions are satisfied.
\begin{itemize}
    \item[(a)] For any state $\rho$, the probability measure defined by $B$ matches the probability measure defined by an experimental post-processing of $A$ via $g$, i.e.,
    \begin{align}
        P_{\rho}( \ \cdot \ , B) = P_{\rho}(g^{-1}( \ \cdot \ ), A).
    \end{align}
    It means that equation \ref{eq: KS definition} is satisfied.
    \item[(b)] For any Borel set $\Delta$, the events $(\Delta,B)$ and $(g^{-1}(\Delta),A)$ update the state of the system in the same way, that is to say,
    \begin{align}
        T_{(\Delta,B)} = T_{(g^{-1}(\Delta),A)}.
    \end{align}
\end{itemize}
\end{definition}
Let $\mathfrak{F}$ be classical mechanics, and let $\Lambda$ be the space of pure states of a system $(\mathcal{O},\mathcal{S},P,T)$; for the sake of simplicity, let's assume that $\Lambda = \mathbb{R}^{n}$. As we mentioned, a (not necessarily pure) state of the system consists in a Borel measure $\mu$ on  $\Lambda$, and, for any Borel set $U \subset \Lambda$, $\mu(U)$ is the probability that the pure state of the system lies in $U$ \cite{KS1967}. This state can be understood as the degree of knowledge of the experimentalist about the pure state of the system. 
After an event $(\Delta,A)$, her  knowledge is updated, and she ends up with the conditional probability
\begin{align}
    \mu( \ \cdot \ \vert A^{-1}(\Delta)) \doteq\label{eq: classical update} \frac{\mu( \ \cdot \ \cap A^{-1}(\Delta))}{\mu(A^{-1}(\Delta))}.
\end{align}
This is the state of the system immediately after the measurement event $(\Delta,A)$, thus, if $\mu(A^{-1}(\Delta)) \neq 0$,
\begin{align}
    T_{(\Delta,A)}(\mu) &\doteq \mu( \ \cdot \ \vert A^{-1}(\Delta)).
\end{align}
If $\mu(A^{-1}(\Delta)) = 0$, we define $T_{(\Delta,A)}(\mu)$ as the null measure $0$. Note that the null measure is not a probability measure, but, for practical purposes, it is useful to include it in the set of states, given that we would otherwise be forced to restrict the domain of $T_{(\Delta,A)}$ to the collection of states satisfying $T_{(\Delta,A)}(\mu) \neq 0$. In any case, the state $\mu$ of a system will never be updated to the null measure, since  $T_{(\Delta,A)}(\mu) = 0$ if and only if $P_{\mu}(\Delta,A) = 0$, which means that any event satisfying $T_{(\Delta,A)}(\mu) = 0$ is impossible for a system in the state $\mu$. 

The mapping $(\Delta,A) \mapsto A^{-1}(\Delta)$ canonically induces an equivalence relation $\sim$ in the collection $\mathbb{E}$ of all measurement events, namely $(\Delta,A) \sim (\widetilde{\Delta},B)$ iff $A^{-1}(\Delta) = B^{-1}(\widetilde{\Delta})$. Both the probability of an event $(\Delta,A)$ and the way this event updates the state of the system depend solely on the measurable set $A^{-1}(\Delta)$ associated with it, thus equivalent events affects the system in precisely the same way. Consequently, if an observable $B$ is a function $g(A)$ of another observable $A$, i.e., if $B = g \circ A$ for some Borel function $g: \mathbb{R} \rightarrow \mathbb{R}$, $B$ is a logically possible observable representing an experimental post-processing of $A$ via $g$, in the sense of definition \ref{def: post-processing}, given that, for any Borel set $\Delta$, $B^{-1}(\Delta) = (g \circ A)^{-1}(\Delta) = A^{-1}(g^{-1}(\Delta))$, which means that $(\Delta,B)$ and $(g^{-1}(\Delta),A)$ are equivalent. As we show next, things are not so simple in quantum theory.

\section{The collapse postulate in quantum theory}\label{sec: collapse}
For the sake of simplicity, we will focus on finite-dimensional quantum systems from now on. In this case, we can identify a quantum system $(\mathcal{O},\mathcal{S},P,T)$ (see definition \ref{def: F-system}) with a finite-dimensional Hilbert space $H$.  Any selfadjoint operator in $H$ is assumed to be a valid observable, thus $\mathcal{O} = \mathcal{B}(H)_{\text{sa}}$, and for the same reason why we considered the null measure as a state in classical mechanics, we will include the null operator in $\mathcal{S}$, hence $\mathcal{S} = \mathcal{D}(H) \cup \{0\}$, where $\mathcal{D}(H)$ is the collection of all density operators in $H$ --- despite that, unless explicitly stated otherwise, by a state we mean a density operator. As we have seen, the mapping $P$ is defined by the Born rule, which means that, given any state $\rho$ and any observable $A$, we have, for any Borel set $\Delta$, \cite{KS1967, landsman2017foundations}
\begin{align}
    P_{\rho}(\Delta,A) \doteq \tr(\rho E_{\Delta}),
\end{align}
where $E_{\Delta} \equiv \chi_{\Delta}(A)$. It is worth to emphasize that, in finite-dimensional Hilbert spaces,
\begin{align}
    E_{\Delta} &= \sum_{\af \in \sigma(A) \cap \Delta} E_{\af},
\end{align}
where $E_{\af} = \chi_{\{\af\}}(A)$ is the projection onto the subspace spanned by the eigenvalue $\af$ of $A$. Since $\sigma(A)$ is a finite set, it is useful to consider the probability distribution $p^{\rho}_{A}$ induced by $P_{\rho}( \ \cdot \ ,A)$ on $\sigma(A)$, namely \cite{landsman2017foundations}
\begin{align}
    p^{\rho}_{A}(\af) &\doteq P_{\rho}(\{\af\},A) = \tr(\rho E_{\af}).
\end{align}

For practical purposes, we will assume from now on  that $\Delta \subset \sigma(A)$. If $\Delta$ is a singleton $\{\af\}$, we will write $(\af,A)$ rather than $(\{\af\},A)$. 

If the experimentalist has access to the outcome of a measurement, she can use the collapse postulate to apprehend the state of the system immediately after this procedure:

\begin{definition}[Collapse postulate, \cite{hannabuss1997introduction, nielsen_chuang_2010, hall2013quantum}]\label{def: collapse postulate} Let $H$ be a finite-dimensional Hilbert space. When a measurement event $(\af,A)$ occurs, that is to say, when a measurement of $A$ yields the outcome $\af \in \sigma(A)$, the state $\rho$ of the system is updated to 
\begin{align}
    \rho^{A}_{\af} \doteq \frac{E_{\af} \rho E_{\af}}{\tr(E_{\af} \rho E_{\af})} =  \frac{E_{\af} \rho E_{\af}}{p^{\rho}_{A}(\af)},
\end{align}
where $E_{\af} \equiv \chi_{\{\af\}}(A)$.
\end{definition}
Hence, for any event $(\af,A)$, $\af \in \sigma(A)$, the mapping $T_{(\af,A)}: \mathcal{S}\rightarrow \mathcal{S}$ (see definition \ref{def: F-system}) is given by
\begin{align}
    T_{(\af,A)}(\rho) \doteq
    \begin{cases}
    \rho^{A}_{\af} \ \qquad \text{if} \ p^{\rho}_{A}(\af) \neq 0
    \\
    0  \ \ \qquad \text{otherwise}.
    \end{cases}
\end{align}


Now, consider a subjective measurement event $(\Delta,A)$, and, for the sake of argument, assume that $\Delta = \{\af,\af'\}$, where $\af \neq \af'$. As we have mentioned in section \ref{sec: general collapse}, if all the experimentalist knows is that $(\Delta,A)$ has occurred, then the information she has about the objective event that has taken place is that it is  $(\af,A)$ or $(\af',A)$. If she believes in the collapse postulate, she will conclude that the state of the system immediately after this event is necessarily $\rho^{A}_{\af}$ or $\rho^{A}_{\af'}$. Furthermore, she knows what is the probability of each one of these alternatives. In fact, the state of the system  is $\rho^{A}_{\af}$ if and only if $\af$ has been obtained, and, under the evidence that $\Delta$ has occurred, the probability of $\af$ (or, equivalently, $(\af,A)$) is given by the conditional probability
\begin{align}
    P^{\rho}_{A}(\{\af\} \vert \Delta) &= \frac{P^{\rho}_{A}(\{\af\} \cap \Delta  )}{P^{\rho}_{A}(\Delta)} = \frac{P^{\rho}_{A}(\{\af\})}{P^{\rho}_{A}(\Delta)},
\end{align}
where $P^{\rho}_{A} \equiv P_{\rho}( \ \cdot \ ,A)$. Analogously, the probability of $\af'$, under the evidence that $\Delta$ has occurred, is
\begin{align}
    P^{\rho}_{A}(\{\af'\} \vert \Delta) &= \frac{P^{\rho}_{A}(\{\af\})}{P^{\rho}_{A}(\Delta)}.
\end{align}
Note that subjective events are purely psychological entities and, being so, they are not subjected to the physical laws governing the physical system in question; this is why we can assume that they respect standard conditional probability. Also, the very definition of $P_{\rho}( \ \cdot \ ,A)$ \cite{KS1967, landsman2017foundations} presupposes that subjective events respect standard probability theory --- and, in particular, the standard conditional probability ---, given that $P_{\rho}( \ \cdot \ ,A)$ is a standard probability measure in a standard measurable space, where the standard rules of probability theory are valid, so our assumption is precisely the assumption behind the Born rule for infinite-dimensional quantum systems \cite{landsman2017foundations, hall2013quantum}. Since $P^{\rho}_{A}(\{\af\}\vert \Delta) + P^{\rho}_{A}(\{\af'\} \vert \Delta) = P^{\rho}_{A}(\Delta \vert \Delta) = 1$, we can define
\begin{align}
    \rho^{A}_{\Delta} \doteq P^{\rho}_{A}(\{\af\}\vert \Delta) \rho^{A}_{\af} + P^{\rho}_{A}(\{\af'\} \vert \Delta)\rho^{A}_{\af'},
\end{align}
and it is easy to see that, according to the canonical view on density operators \cite{nielsen_chuang_2010}, this is the state describing the system after measurement event $(\Delta,A)$. In fact, according to this view, $\rho^{A}_{\Delta}$ describes the situation in which the system is in one of the states $\rho^{A}_{\af}, \rho^{A}_{\af'}$ with probabilities $P^{\rho}_{A}(\{\af\}\vert \Delta),P^{\rho}_{A}(\{\af'\}\vert \Delta)$ respectively \cite{nielsen_chuang_2010}, and, as we argued, this is precisely  what the experimentalist knows about the state of the system after the event $(\Delta,A)$. In the general case, where $\Delta$ is any nonempty subset of $\sigma(A)$, a completely analogous line of though leads us to the state $\rho^{A}_{\Delta} = \sum_{\af \in \Delta} P^{\rho}_{A}(\{\af\}\vert \Delta) \rho^{A}_{\af}$. Therefore, a measurement event $(\Delta,A)$ transforms a state $\rho$ into
\begin{align}
    \rho^{A}_{\Delta} &= \sum_{\af \in \Delta} P^{\rho}_{A}(\{\af\}\vert \Delta) \rho^{A}_{\af} =  \sum_{\af \in \Delta} \frac{P^{\rho}_{A}(\{\af\})}{P^{\rho}_{A}(\Delta)} \frac{E_{\af}\rho E_{\af}}{P^{\rho}_{A}(\{\af\})}
    \\
    &= \frac{1}{\tr(\rho E_{\Delta})}\sum_{\af \in \Delta}E_{\af} \rho E_{\af},
\end{align}
where $E_{\af} \equiv \chi_{\{\af\}}(A)$ and  $E_{\Delta} \equiv \chi_{\Delta}(A) =  \sum_{\af \in \Delta}E_{\af}$. Recall that the set of density operators is a convex set, so $\rho^{A}_{\Delta}$ is a density operator. Moreover, if $\Delta = \{\af\}$, $\rho^{A}_{\Delta}$ reduces to the state  $\rho^{A}_{\af}$ given by the standard collapse postulate (definition \ref{def: collapse postulate}). Therefore, the collapse postulate can be generalized as follows.
\begin{definition}[Collapse postulate including subjective events]\label{def: subjective collapse} Let $H$ be a finite-dimensional Hilbert space. When a measurement event $(\Delta, A)$ occurs, that is to say, when a measurement of $A$ yields an outcome lying in $\Delta$, the state $\rho$ of the system is updated to
\begin{align}
    \rho^{A}_{\Delta} \doteq  \sum_{\af \in \Delta} P^{\rho}_{A}(\{\af\}\vert \Delta) \rho^{A}_{\af} = \frac{1}{\tr(\rho E_{\Delta})}\sum_{\af \in \Delta}E_{\af} \rho E_{\af},
\end{align}
where $E_{\af} \equiv \chi_{\{\af\}}(A)$ and $E_{\Delta} =  \sum_{\af \in \Delta}E_{\af}$.
\end{definition}
As we mention in section \ref{sec: general collapse}, this collapse is ``subjective'', given that it depends on how much information the experimentalist extracts form the measurement procedure.

For any event $(\Delta,A)$, the mapping $T_{(\Delta,A)}: \mathcal{S}\rightarrow \mathcal{S}$ (see definition \ref{def: F-system}) is given by
\begin{align}
    T_{(\Delta,A)}(\rho) &\doteq \sum_{\af \in \Delta} P^{\rho}_{A}(\{\af\}\vert \Delta) T_{(\af,A)}(\rho)
\end{align}
if $P_{\rho}(\Delta,A) \neq 0$, and  $T_{(\Delta,A)}(\rho) \doteq 0$ otherwise.

To fix ideas, let's analyze two particular cases. To begin with, given any state $\rho$, we obtain
\begin{align}
    \rho^{A} \equiv \rho^{A}_{\sigma(A)} &= \sum_{\af \in \sigma(A)} E_{\af} \rho E_{\af},
\end{align}
which is precisely the definition of ``loss of measurement outcome'' provided by Mark Wilde in  Ref.~\cite{wilde2013quantum}. The fact that, in general, $\rho^{A}$ is different from $\rho$ shows that our knowledge $\rho$ about the state of the system is updated even if we do not extract information from this measurement procedure. This is in accordance with the idea that measurements in quantum systems are interventions that disturb the system. 
Next, let $\rho$ be a pure state $\vert \psi \rangle \langle \psi  \vert$, where $\psi \in H$ is normalized, and assume that the spectrum of $A$ is nondegenerate, that is to say, $A$ has $n$ distinct eigenvalues $\af_{1},\dots,\af_{n}$, where $n$ is the dimension of $H$. Let $\phi_{i}$ be a normalized vector in the subspace spanned by $\af_{i}$, that is, $\vert \phi_{i}\rangle \langle \phi_{i} \vert = E_{i}$, where $E_{i} \equiv \chi_{\{\af_{i}\}}(A)$. Then, for any $\Delta \equiv \{\af_{i_{k}}: k=1,\dots m\} \subset \sigma(A)$ such that $\vert \Delta \vert >1$ and $E_{\Delta} \psi \neq 0$,
\begin{align}
    T_{(\Delta,A)}(\vert \psi \rangle \langle \psi \vert) &= \frac{1}{\langle \psi \vert E_{\Delta} \psi \rangle} \sum_{k=1}^{m} \vert \langle \phi_{i_{k}} \vert \psi\rangle \vert^{2} E_{i_{k}}
    \\
    &\equiv\frac{1}{\langle \psi \vert E_{\Delta} \psi \rangle} \sum_{k=1}^{m} \vert \langle \phi_{i_{k}} \vert \psi\rangle \vert^{2} \vert \phi_{i_{k}} \rangle \langle \phi_{i_{k}} \vert. 
\end{align}
Since pure states are extreme points in the convex set of density operators, $T_{(\Delta,A)}(\vert \psi \rangle \langle \psi \vert)$ is pure if and only if $E_{\Delta}\psi$ is an eigenvector of $A$ corresponding to an eigenvalue which lies in $\Delta$. In particular, if $\Delta = \sigma(A)$,
\begin{align}
    T_{(\sigma(A),A)}(\vert \psi \rangle \langle \psi \vert) &=  \sum_{i=1}^{n} \vert \langle \phi_{i} \vert \psi\rangle \vert^{2} P_{i}
    \\
    &\equiv  \sum_{i=1}^{n} \vert \langle \phi_{i} \vert \psi\rangle \vert^{2} \vert \phi_{i} \rangle \langle \phi_{i} \vert. 
\end{align}
It shows that measuring $A$ without extracting any information from it can disturb the state of the system even if the state is pure.



\section{Functional relations and the collapse postulate}\label{sec: conflict}

In section \ref{sec: general collapse} we mentioned that, in classical mechanics, the mapping $(\Delta,A) \mapsto A^{-1}(\Delta)$, which  associates measurement events to measurable sets, canonically induces an equivalence relation in $\mathbb{E}$, and we proved that equivalent measurement events not only are equally probable w.r.t. any state, but that they also update the state of the system in precisely the same way. The fact that functions of observables in classical mechanics are logically possible observables representing post-processings, in the sense of definition \ref{def: post-processing}, is a straightforward consequence of this fact. Now, consider the case of quantum theory. Let $H$ be a finite-dimensional Hilbert space, and let $\mathbb{E}$ be the collection of all measurement events in this system. Let $\text{Pr}(H)$ be the collection of all orthogonal projections in $H$, i.e., $E \in \text{Pr}(H)$ if and only if $E$ is a selfadjoint operator satisfying $P^{2} = P$. According to the Borel functional calculus, a measurement event $(\Delta,A)$ defines an orthogonal projection $\chi_{\Delta}(A)$, and the mapping $(\Delta,A) \mapsto \chi_{\Delta}(A)$ induces an equivalence relation $\sim$ in $\mathbb{E}$, where $(\Delta,A) \sim (\widetilde{\Delta},B)$ if and only if $\chi_{\Delta}(A) = \chi_{\widetilde{\Delta}}(B)$. In particular, given any observable $A$ and any function $g(A)$ of $A$, defined according to the functional calculus, we have, for any Borel set $\Delta$, $(\Delta,g(A)) \sim (g^{-1}(\Delta),A)$, since $\chi_{\Delta}(g(A)) = \chi_{g^{-1}(\Delta)}(A)$. Equivalent events are equally probable w.r.t. any state, so, as we have already seen, equation \ref{eq: KS definition} (equivalently, item $(a)$ from definition \ref{def: post-processing}) is satisfied. Consider now the update determined by two events $A^{\Delta} \sim B^{\widetilde{\Delta}}$. According to definition \ref{def: subjective collapse}, we have, for any state $\rho$ such that $\tr(\rho \chi_{\Delta}(A)) \neq 0$ (equivalently, $\tr(\rho \chi_{\widetilde{\Delta}}(B)) \neq 0$),
\begin{align}
    T_{(\Delta,A)}(\rho) &= \frac{1}{\tr(\rho E)}\sum_{\af \in \Delta} E_{\af} \rho E_{\af},
    \\
    T_{(\widetilde{\Delta},B)}(\rho) &= \frac{1}{\tr(\rho E)}\sum_{\beta \in \widetilde{\Delta}} F_{\beta} \rho F_{\beta},
\end{align}
where $E \equiv \chi_{\Delta}(A) = \chi_{\widetilde{\Delta}}(B)$, $E_{\af} \equiv \chi_{\{\af\}}(A)$ and $F_{\beta} \equiv \chi_{\{\beta\}}(B)$. These states are not necessarily the same, which means that equivalent events do not necessarily update the state of the system in the same way. Hence, if definition \ref{def: subjective collapse} is correct, the fact that equivalent events are associated with the same projection does not guarantee that $g(A)$ is a post-processing of $A$, in the sense of definition \ref{def: post-processing}. And, as we show in the following proposition, this really isn't the case.

\begin{proposition}\label{proposition: decisive lemma}
Let $A$ be a selfadjoint operator in a finite-dimensional Hilbert space $H$, and let $g(A) \in \mathcal{B}(H)$ be a function of $A$, defined according to the functional calculus. Then, for any eigenvalue $\beta$ of $g(A)$, the following claims are equivalent.
\begin{itemize}
    \item[(a)] Measurement events $(\beta,g(A))$ and $(g^{-1}(\beta),A)$ update the state of the system in the same way, i.e.,
    \begin{align}
        T_{(\beta, g(A))} = T_{(g^{-1}(\beta),A)}.
    \end{align}
    
    \item[(b)] $f^{-1}(\beta)$ is a singleton.
\end{itemize}
\end{proposition}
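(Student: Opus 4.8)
The plan is to reduce the equality of the two update maps to the vanishing of certain off-diagonal ``coherence'' terms, and then settle each implication separately. First I would identify the common projection: since $\beta$ is an eigenvalue of $g(A)$, the composition rule of the Borel functional calculus gives $\chi_{\{\beta\}}(g(A)) = \chi_{g^{-1}(\beta)}(A)$. Writing $\Delta \doteq g^{-1}(\beta) \cap \sigma(A)$ and $E \doteq \chi_{\{\beta\}}(g(A)) = \sum_{\af \in \Delta} E_{\af}$, where $E_{\af} \equiv \chi_{\{\af\}}(A)$ are the mutually orthogonal spectral projections of $A$, I note that the left-hand side of (a) involves the \emph{single} eigenvalue $\beta$ of $g(A)$, hence is governed by the standard collapse postulate (definition \ref{def: collapse postulate}), whereas the right-hand side is the subjective collapse over the set $\Delta$ (definition \ref{def: subjective collapse}). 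Thus, for any state $\rho$ with $\tr(\rho E) \neq 0$,
\begin{align}
    T_{(\beta,g(A))}(\rho) = \frac{E \rho E}{\tr(\rho E)}, \qquad T_{(g^{-1}(\beta),A)}(\rho) = \frac{1}{\tr(\rho E)} \sum_{\af \in \Delta} E_{\af} \rho E_{\af},
\end{align}
and both send $\rho$ to $0$ when $\tr(\rho E)=0$, so it suffices to compare the numerators.

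Next I would expand $E \rho E$ using $E = \sum_{\af \in \Delta} E_{\af}$ and the orthogonality relation $E_{\af} E_{\af'} = \delta_{\af \af'} E_{\af}$, obtaining
\begin{align}
    E \rho E = \sum_{\af \in \Delta} E_{\af} \rho E_{\af} + \sum_{\substack{\af, \af' \in \Delta \\ \af \neq \af'}} E_{\af} \rho E_{\af'}.
\end{align}
Hence the two updates coincide for every state $\rho$ if and only if the cross-term sum $\sum_{\af \neq \af'} E_{\af} \rho E_{\af'}$ vanishes for every $\rho$. This reduction is the crux that makes the equivalence transparent.

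The implication (b)$\Rightarrow$(a) is then immediate: if $\Delta = g^{-1}(\beta) \cap \sigma(A)$ is a singleton, the cross-term sum is empty and the two numerators agree identically. For (a)$\Rightarrow$(b) I would argue by contraposition. Assuming $|\Delta| \geq 2$, I would pick distinct $\af_{1}, \af_{2} \in \Delta$, unit vectors $\phi_{1}, \phi_{2}$ with $\sketbra{\phi_{1}} \leq E_{\af_{1}}$ and $\sketbra{\phi_{2}} \leq E_{\af_{2}}$, and test the equality on the coherent superposition $\psi \doteq (\phi_{1} + \phi_{2})/\sqrt{2}$, $\rho = \sketbra{\psi}$. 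Since $\psi$ lies in the range of $E$, one has $E \rho E = \rho$, which retains the off-diagonal coherences $\ketbra{\phi_{1}}{\phi_{2}} + \ketbra{\phi_{2}}{\phi_{1}}$, whereas $\sum_{\af \in \Delta} E_{\af} \rho E_{\af} = \tfrac{1}{2}(\sketbra{\phi_{1}} + \sketbra{\phi_{2}})$ destroys them; since $\tr(\rho E) = 1 \neq 0$, this $\rho$ is a genuine counterexample and (a) fails.

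The main obstacle is conceptual rather than computational: the entire content lies in recognizing that the fine-grained collapse $(\beta,g(A))$ preserves quantum coherence \emph{within} the degenerate $\beta$-eigenspace of $g(A)$ (that is, between eigenvectors of $A$ sharing the same $g$-value), while the coarse-grained subjective collapse $(g^{-1}(\beta),A)$ decoheres exactly those superpositions. The superposition state above is engineered precisely to expose this difference, and degeneracy of $g$ on $\sigma(A)$ --- equivalently $|\Delta| > 1$ --- is what makes such a state available.
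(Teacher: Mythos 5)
Your proposal is correct and follows essentially the same route as the paper: the same decomposition of $E\rho E$ into the diagonal sum $\sum_{\af\in\Delta}E_{\af}\rho E_{\af}$ plus off-diagonal cross-terms (the paper's equation relating the two updates), and the same witness state $\psi = (\phi_{1}+\phi_{2})/\sqrt{2}$ for the converse direction. The only cosmetic difference is that you phrase the crux as an explicit ``iff the coherences vanish'' reduction before exhibiting the counterexample, whereas the paper goes directly to the counterexample; the content is identical.
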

\begin{proof}
    Let $A = \sum_{\af \in \sigma(A)} \af E_{\af}$ and $g(A) = \sum_{\beta \in \sigma(g(A))} \beta F_{\beta}$ be the spectral decompositions of $A$ and $B$ respectively, and write $\Delta \equiv g^{-1}(\beta)$. We know that, given any $\beta \in \sigma(g(A))$, $F_{\beta} = \sum_{\af \in \Delta} E_{\af}$. Also, it is easy to see that, for every state $\rho$ such that $\tr(\rho F_{\beta}) \neq 0$,
    \begin{align}
       T_{(\beta,g(A))}(\rho) &=\label{eq: relating updates} T_{(\Delta,A)}(\rho) + \frac{1}{\tr(\rho F_{\beta})} \sum_{\substack{(\af,\af') \in \Delta \times \Delta \\ \af' \neq \af}} E_{\af}\rho E_{\af'},
    \end{align}
    whereas $T_{(\beta,g(A))}(\rho) = 0 = T_{(\Delta,A)}(\rho)$ otherwise. If $\Delta$ is a singleton, equation \ref{eq: relating updates} reduces to $T_{(\beta,g(A))}(\rho) = T_{(\Delta,A)}(\rho) $, so item $(a)$ follows from item $(b)$. On the other hand, suppose that $\Delta$ is not a singleton, and let $\af_{0},\af_{1}$ be distinct elements of it. Let $\phi_{i}$ be an normalized eigenvector of $A$ associated with the eigenvalue $\af_{i}$, $i=0,1$, and define $\psi \doteq \frac{1}{\sqrt{2}} (\phi_{0} + \phi_{1})$. Clearly, $g(A) \psi = \beta \psi$, thus $T_{(\beta,g(A))}(\sketbra{\psi}) = \sketbra{\psi}$, whereas
    \begin{align}
        T_{(\Delta,A)}(\sketbra{\psi}) = \frac{1}{2}(\sketbra{\phi_{0}} + \sketbra{\phi_{1}}) \neq \sketbra{\psi}.
    \end{align}
    It proves that item $(a)$ cannot be satisfied when $\vert \Delta \vert > 1$, therefore $(a)$ implies $(b)$.
\end{proof}

Hence, $g(A)$ is a post-processing of $A$ via $g$, in the sense of definition \ref{def: post-processing}, if and only if the restriction of $g$ to $\sigma(A)$ is an injective function, which means that $g(A)$ and $A$ are associated with the same partition of the identity via the spectral theorem (in other words, they are associated with the same PVM). However, this is the trivial example of a function of an observable, and it excludes functional relations of undeniable physical relevance, like $A^{2}$, the  motivating example presented by Kochen and Specker \cite{KS1967}. We thus have  the following theorem about quantum theory.

\begin{theorem}\label{theorem: TTT}
The following statements about quantum theory cannot be simultaneously true.
\begin{enumerate}
    \item The standard collapse postulate (definition \ref{def: collapse postulate}) is correct.
    \item The collapse postulate including subjective events (definition \ref{def: subjective collapse}) is correct.
    \item Functions of observables in quantum theory are theoretical representations of experimental post-processings, in the sense of definition \ref{def: post-processing}.
\end{enumerate}
\end{theorem}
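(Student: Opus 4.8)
The plan is to assume that statements 1, 2, and 3 all hold simultaneously and to derive a contradiction, the analytic heart of which is already supplied by Proposition \ref{proposition: decisive lemma}; the theorem is essentially a corollary of it. First I would fix a concrete witness, namely Kochen and Specker's own example. Choose $H$ with $\dim H \geq 2$, a selfadjoint operator $A$ whose spectrum contains two distinct eigenvalues with $\af_{0} = -\af_{1} \neq 0$, and $g(x) = x^{2}$. Then $g(A) = A^{2}$ is a function of $A$ in the sense of the functional calculus, and the eigenvalue $\beta \doteq \af_{0}^{2} = \af_{1}^{2}$ of $g(A)$ satisfies $g^{-1}(\beta) \cap \sigma(A) = \{\af_{0},\af_{1}\}$, which is not a singleton.

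Next I would trace which hypothesis governs which object. By statement 3, $g(A)$ is a post-processing of $A$ via $g$, so item (b) of Definition \ref{def: post-processing}, applied with $\Delta = \{\beta\}$, forces $T_{(\beta,g(A))} = T_{(g^{-1}(\beta),A)}$. The left-hand side is the collapse under the singleton event $(\beta,g(A))$, hence is fixed by statement 1 (Definition \ref{def: collapse postulate}), which retains the full coherence on the $\beta$-eigenspace of $g(A)$; the right-hand side is the collapse under the non-singleton subjective event $(g^{-1}(\beta),A)$, hence is fixed by statement 2 (Definition \ref{def: subjective collapse}), which decoheres into a convex mixture over the eigenvalues in $g^{-1}(\beta)$. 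This attribution is the one place where care is needed, since it shows that the equality demanded by statement 3 pits the fine-grained standard collapse of $g(A)$ against the coarse-grained subjective collapse of $A$; thus all three statements are genuinely invoked, not merely two.

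Finally I would apply Proposition \ref{proposition: decisive lemma} to this $\beta$. Because $g^{-1}(\beta)$ is not a singleton, clause (b) fails, so by the established equivalence clause (a) fails as well, giving $T_{(\beta,g(A))} \neq T_{(g^{-1}(\beta),A)}$ and contradicting the equality forced by statement 3. For definiteness one may reuse the witness state from the proof of that proposition: on $\psi = \tfrac{1}{\sqrt{2}}(\phi_{0} + \phi_{1})$, an eigenvector of $A^{2}$ but not of $A$, the standard collapse returns $T_{(\beta,g(A))}(\sketbra{\psi}) = \sketbra{\psi}$, whereas the subjective collapse returns $\tfrac{1}{2}(\sketbra{\phi_{0}} + \sketbra{\phi_{1}})$. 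The surviving off-diagonal coherence in the former is exactly the obstruction that the latter destroys.

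I do not expect any computational difficulty here, since Proposition \ref{proposition: decisive lemma} performs the analytic work in full. The only delicate point, and hence the main thing to get right, is the logical bookkeeping described above: one must verify that the two sides of the equality demanded by statement 3 are pinned down by two distinct collapse postulates, so that the incompatibility is a genuine three-way one and cannot be blamed on a single pair. Once that attribution is made explicit, the contradiction follows immediately from the non-injectivity of $g$ on $\sigma(A)$, which is unavoidable for physically interesting functional relations such as $A^{2}$.
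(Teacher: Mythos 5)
Your proposal is correct and follows essentially the same route as the paper, which presents Theorem \ref{theorem: TTT} as an immediate corollary of Proposition \ref{proposition: decisive lemma} applied to a non-injective $g$ on $\sigma(A)$ (with $g(x)=x^{2}$ as the motivating instance). Your explicit bookkeeping of which of the three statements pins down each side of the equality $T_{(\beta,g(A))} = T_{(g^{-1}(\beta),A)}$ is a welcome clarification of what the paper leaves implicit, but it does not change the argument.
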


\section{Discussion}\label{sec: discussion}
At least one of the three statements presented in theorem \ref{theorem: TTT} must be false. In this section we analyze these statements one by one, and we discuss the consequences of denying each one of them.

\subsection{Functional relations}
If we want to keep definitions \ref{def: collapse postulate} and \ref{def: subjective collapse}, we are forced to accept that, according to the criteria established in definition \ref{def: post-processing}, a function $g(A)$ of an observable $A$ in quantum theory is not a logically possible observable representing an experimental post-processing of $A$ via $g$, as Kochen and Specker suggest \cite{KS1967}. At first glance, it may seem a plausible conclusion, since, in some cases, the analogy between functions of observables and post-processings is not immediately clear. For instance, the momentum operator in $L^{2}(\mathbb{R})$ is the differential operator $P \equiv \frac{d}{dx}$, and $P^{2}$ is $\frac{d^{2}}{dx^{2}}$. In this case, it is not obvious that the relation between $P$ and $P^{2}$ has to be analogous to the relation between the momentum observable and its square in classical mechanics. However, the spectral theorem tells us that, if $A = \int_{\sigma(A)} \af \  dE_{\af}$ is the spectral decomposition of a selfadjoint operator $A$, then, for any Borel function $g$, we have $g(A) = \int_{\sigma(A)} g(\af) \  dE_{\af}$ \cite{barata2006curso, kadison1997fundamentals}, whereas the spectral mapping theorem states that $\sigma(g(A)) = g(\sigma(A))$ \cite{kadison1997fundamentals}. In particular, we have $P^{2} = \int_{\mathbb{R}} p^{2}\ dE_{p}$, where $P = \int_{\mathbb{R}} p \ dE_{p}$ is the spectral decomposition of $P = \frac{d}{dx}$. These theorems establish a relation between $A$ and $g(A)$ --- in particular, between $P$ and $P^{2}$ --- which is similar to the connection between $A: \Lambda \rightarrow \mathbb{R}$ and $g \circ A$ in many ways. Furthermore, recall that the whole process of quantization presupposes that functions of observables play the same role in both quantum and classical mechanics, and by doing so we are led to the correct predictions of experiments. Therefore, as we see it, dismissing Kochen and Specker's view on functional relations, although not absurd, does not seem to be the most reasonable way of escaping theorem \ref{theorem: TTT}. 

Keeping Kochen and Specker's view (more precisely, statement $3$ of theorem  \ref{theorem: TTT}) leaves us with only two options: dismissing the collapse postulate (definition \ref{def: collapse postulate}) or its generalized version (definition \ref{def: subjective collapse}). Let's begin by analyzing the latter.

\subsection{The collapse postulate for subjective events}

Recall that proposition \ref{proposition: decisive lemma} follows from the fact that, being definition \ref{def: subjective collapse} as it is, equivalent measurement events do not necessarily update the state of the system in the same way. A straightforward way of forcing equivalent events to update the system equally, and consequently saving the analogy between classical and quantum functional relations, consists in neglecting definition \ref{def: subjective collapse}  and imposing that, whenever a subjective event $(\Delta,A)$ happens, the state $\rho$ of the system is updated by $E_{\Delta} \equiv \chi_{\Delta}(A)$:
\begin{align}
    \rho^{A}_{\Delta} \doteq\label{eq: new update} \frac{E_{\Delta} \rho E_{\Delta}}{\tr(E_{\Delta}\rho E_{\Delta})}.
\end{align}
As we see it, this solution conflicts with the standard view on density operators \cite{nielsen_chuang_2010}. In fact, according to this view, the state
\begin{align}
    \underline{\rho^{A}_{\Delta}} &\doteq  \sum_{\af \in \Delta} P^{\rho}_{A}(\{\af\}\vert \Delta) \rho^{A}_{\af} =  \frac{1}{\tr(\rho E_{\Delta})}\sum_{\af \in \Delta}E_{\af} \rho E_{\af},
\end{align}
where $E_{\af} \equiv \chi_{\{\af\}}(A)$, represents the situation in which the state of the system is $\rho^{A}_{\af}$ (see definition \ref{def: collapse postulate}) with respective probability $P^{\rho}_{A}(\{\af\}\vert \Delta)$ \cite{nielsen_chuang_2010}. As we argued in section \ref{sec: collapse}, this must be the state of the system after  event $(\Delta,A)$, since this subjective event happens if and only if one of the events $\{(\af,A): \af \in \Delta \}$ occurs, and since  $P^{\rho}_{A}(\{\af\}\vert \Delta)$ is the probability of $(\af,A)$ under the evidence that $\Delta$ has happened. On the other hand, equation \ref{eq: new update} implies that, for any state $\rho$ satisfying $\tr(\rho E_{\Delta}) \neq 0$,
\begin{align}
    \rho^{A}_{\Delta} &= \underline{\rho^{A}_{\Delta}} + \frac{1}{\tr(\rho E_{\Delta})} \sum_{\substack{(\af,\af') \in \Delta \times \Delta \\ \af' \neq \af}} E_{\af}\rho P_{\af'},
\end{align}
and in the proof of lemma \ref{proposition: decisive lemma} we showed, \textit{en passant}, that we have $\rho^{A}_{\Delta} = \underline{\rho^{A}_{\Delta}}$  for every state $\rho$ if and only if $\vert \Delta \vert = 1$. Hence,  equation \ref{eq: new update} is compatible with the standard view on density operators only in the trivial case where it coincides with the collapse postulate (definition \ref{def: collapse postulate}).
Furthermore, equation \ref{eq: new update} implies that, if $\Delta = \sigma(A)$,
\begin{align}
\rho^{A} &= \rho,
\end{align}
where $\rho^{A} \equiv \rho^{A}_{\sigma(A)}$. It contradicts the definition of ``loss of measurement outcome'' provided in Ref. \cite{wilde2013quantum}. Also, it is in conflict with the idea that measurements in quantum systems disturb the system, since measuring $A$ and ignoring the outcome turns out to be equivalent to doing nothing. We thus believe that equation \ref{eq: new update} has to be discarded, and we see no better option  than definition \ref{def: subjective collapse} for the subjective collapse. 

\subsection{The collapse postulate}

Lemma \ref{proposition: decisive lemma} make it clear that the distinction between $g(A)$ and a logically possible observable derived from $A$, in the sense of definition \ref{def: post-processing}, only appear if $g$ is non-injective. The reason is that, if $g^{-1}(\beta)$ is not a singleton, then $(g^{-1}(\beta),A)$ updates the state according to definition \ref{def: subjective collapse}, whereas $(\beta,g(A))$  updates it according to the collapse postulate (definition \ref{def: collapse postulate}). This conflict vanishes if we restrict the collapse postulate to nondegenerate observables, namely observables whose spectrum is nondegenerate, and apply the subjective collapse to all degenerate ones. To put it differently, the solution consists in treating any event whose projection has rank strictly greater than one as a subjective event. 
An immediate side effect of this potential solution is that the update determined by an degenerate observable depends on a particular choice of basis. In fact, let $B$ be a degenerate observable in a $n$-dimensional system. For the sake of argument, assume that only one eigenvalue of $B$ (say, $\beta^{(0)}$) is degenerate, and let $k < n$ be the dimension of the subspace spanned by $\beta^{(0)}$ --- equivalently, $k$ is the rank of the orthogonal projection $E_{\beta^{(0)}}$ onto this subspace. Enumerate the spectrum of $B$ 
in such a way that $\beta_{i} = \beta^{(0)}$ if and only if $i \leq k$. Then, in order to know how the event $(\beta^{(0)},B)$ updates the state of the system, we need to fix a basis $\{\phi_{1},\dots,\phi_{k}\}$ within the subspace generated by $\beta^{(0)}$, i.e., we need to fix a collection $E_{i} \equiv \ketbra{\phi_{i}}{\phi_{i}}$ of rank-one orthogonal projections such that $\sum_{i=1}^{k} E_{i} = E_{\beta^{(0)}}$  and $E_{i}E_{j} = \delta_{i,j} E_{i}$. By doing so, we immediately fix a basis $\{\phi_{1},\dots,\phi_{n}\}$ for $H$, where, for any $i>k$, $E_{i} \equiv \ketbra{\phi_{i}}{\phi_{i}}$ is the projection onto the subspace spanned by $\beta_{i}$. This basis induces a set of nondegenerate observables, consisting in the collection of all real linear combinations $A \equiv \sum_{i=1}^{n} \af_{i}E_{i}$ with pairwise distinct coefficients, and, according to the functional calculus, $B$ is a function of each one of these observables. 
Therefore, by fixing a basis of eigenvectors of $B$ we are fixing a measurement context for $B$, in the sense that we are choosing the nondegenerate observable $A$  that we will measure in order to obtain $B$ by means of a post-processing. From this perspective, the collapse postulate can be redefined as follows.
\begin{definition}[Context-dependent collapse]\label{def: contextual collapse} Let $A$ be a selfadjoint operator in a $n$-dimensional Hilbert space $H$, and let $\mathfrak{B} \equiv \{E_{i}\}_{i=1}^{n}$ be a \textit{measurement basis} for $A$, that is to say, $\mathfrak{B}$ is a set of rank-one orthogonal projections satisfying, for any $i,j\in \{1,\dots,n\}$, $E_{i}E_{j} = \delta_{ij}E_{i}$ and $E_{i}A = \af_{i}E_{i} = A E_{i}$, where $\sigma(A) = \{\af_{i}: i=1,\dots,n\}$. If a measurement of $A$ in the basis $\mathfrak{B}$ yields an outcome $\af$ of $A$, the state $\rho$ of the system is updated to
\begin{align}
    \rho^{(A,\mathfrak{B})}_{\af} \doteq\sum_{\substack{i=1\\\af_{i} = \af }}^{n}  \frac{E_{i}\rho E_{i}}{\tr(\rho E_{\af})} =  \frac{1}{\tr(\rho E_{\af})}\sum_{\substack{i=1\\\af_{i} = \af }}^{n} \langle \phi_{i} \vert \rho \phi_{i}\rangle  \ketbra{\phi_{i}}{\phi_{i}},
\end{align}
where $E_{\af} \equiv \chi_{\{\af\}}(A)$, $\sketbra{\phi_{i}} = E_{i}$ and $\Vert \phi_{i}\Vert = 1$.
\end{definition}
Note that $\rho^{(A,\mathfrak{B})}_{\af}$ is pure for every state $
\rho$ if and only if $\af$ is nondegenerate. This is related to the fact that a degenerate observable can always be seen as a coarse-graining of a nondegenerate one, which in turn indicates that the distinction between degenerate and nondegenerate observables is similar to the distinction between pure and mixed states. We will discuss it in more detail latter.

Definition \ref{def: contextual collapse} suggests the incorporation of measurement bases into the definition of measurement event. From now on, by a \textit{measurement event} we mean a triple $(\Delta,A,\mathfrak{B})$, where $A$ is an observable, $\Delta$ is a subset of $\sigma(A)$ and $\mathfrak{B} \equiv \{E_{i}\}_{i=1}^{n}$ is a measurement basis for $A$. If the line of thought that led us from the collapse postulate to definition \ref{def: subjective collapse} is correct, one who accepts definition \ref{def: contextual collapse} must agree that a subjective event $(\Delta,A,\mathfrak{B})$ has to update the state $\rho$ of the system in the following manner:
\begin{align}
    T_{(\Delta,A,\mathfrak{B})}(\rho) &\doteq\label{eq: subjective T} \sum_{\af \in \Delta} P^{\rho}_{A}(\{\af\}\vert \Delta) \rho^{(A,\mathfrak{B})}_{\af} = \sum_{\af \in \Delta} \frac{P^{\rho}_{A}(\{\af\}\vert \Delta)}{\tr(\rho E_{\af})}\sum_{\substack{i=1\\\af_{i} = \af }}^{n} E_{i}\rho E_{i}
    \\
    &= \sum_{\substack{i=1\\\af_{i} = \af }}^{n} \frac{E_{i} \rho E_{i}}{\tr(\rho E_{\Delta})},
\end{align}
where $E_{\Delta} \doteq \chi_{\Delta}(A)$. If $\tr(\rho E_{\Delta}) = 0$, we define $T_{(\Delta,A,\mathfrak{B})}(\rho) \doteq 0$. Clearly, for any $\af \in \sigma(A)$, $T_{(\{\af\},A,\mathfrak{B})}(\rho) = \rho^{(A,\mathfrak{B})}_{\af}$ whenever $\tr(\rho E_{\af}) \neq 0$, so equation \ref{eq: subjective T} extends definition \ref{def: contextual collapse} to all possible measurement events. 


Distinct measurement bases for an observable $A$ never commute, i.e., if $\mathfrak{B} \equiv \{E_{i}\}_{i=1}^{n}$ and $\mathfrak{C} \equiv \{F_{i}\}_{i=1}^{n}$ are  measurement bases for $A$, then we have
\begin{align}
    \forall_{i,j}: \quad E_{i}F_{j} = F_{j}E_{i}
\end{align}
if and only if $\mathfrak{B}=  \mathfrak{C}$. Therefore, if $B$ and $C$ are nondegenerate observables associated, respectively, with distinct bases $\mathfrak{B}$ and $\mathfrak{C}$ for $A$, that is to say, if $B$ is a real linear combination of $\mathfrak{B}$ with pairwise distinct coefficients, and analogously for $C$, then $[B,C] \neq 0$. We know that $A$ is a function of both $B$ an $C$, namely $A = g(B)$ and $A = h(C)$, and we show in proposition \ref{prop: final} that measuring $A$ in the basis $\mathfrak{B}$ is equivalent to measuring $B$ and post-processing the resulting value via $g$ (analogously for $\mathfrak{C}$). As we briefly mentioned above, this is why we say that definition \ref{def: contextual collapse} is context-dependent: a measurement basis $\mathfrak{B}$ for $A$ can be seen as a measurement context for $A$, insofar it defines a set $\langle\mathfrak{B}\rangle$ of commuting nondegenerate observables such that, for any $B \in \langle \mathfrak{B} \rangle$, $\exists g: A = g(B)$, and measuring $A$ in this basis consists in measuring any $B\in \langle \mathfrak{B} \rangle$ and  post-processing the resulting value via $g$, where $A = g(B)$; furthermore, all nondegenerate observables in $\langle \mathfrak{B}\rangle$ update the state of the system in precisely the same way, and, consequently, the way $A$ updates the state of the system depends solely on the measurement basis.

The collapse due the the measurement of an observable $A$ is context-dependent if and only if $A$ is degenerate, in the sense that there is more than one measurement basis for $A$ if and only if $A$ is degenerate. Equivalently, the collapse due to a measurement of $A$ depends on the context if and only if $A$ can be written as a function  $A=g(B)=h(C)$ of noncommuting observables $B$, $C$. It is well known that it is precisely these observables that obstructs the existence of valuations (definition \ref{def: valuation})  in $\mathcal{B}(H)_{\text{sa}}$\cite{isham1998topos, doring2010thing}, since they prevent the functional composition principle from being satisfied. Hence, the collapse of the state is context-dependent in definition \ref{def: contextual collapse} for the same reason why valuations on $\mathcal{B}(H)_{\text{sa}}$ are context-dependent, by which we mean that a function $V: \mathcal{B}(H)_{\text{sa}} \rightarrow \mathbb{R}$ satisfying the `value rule'  (item $a$ from definition \ref{def: valuation}) necessarily violate equality $g(V(B)) = V(A) = h(V(C))$ for some observable $A$ such that $g(B) = A = h(C)$, where $[B,C] \neq 0$.

Let $A$ be an observable, $g(A)$ be a function of $A$, and let $\boldsymbol{\Delta}_{g}$ be the partition of $\sigma(A)$ defined by $g(A)$, namely $\boldsymbol{\Delta}_{g} \equiv \{\Delta_{\beta}: \beta \in \sigma(g(\beta))\}$, where $\Delta_{\beta} \doteq g^{-1}(\beta)$. We say that $g(A)$ is a \textit{coarse-graining} of $A$ iff $\vert \Delta_{\beta} \vert >1$ for some $\beta \in \sigma(g(B))$, which is equivalent to say that the restriction of $g$ to $\sigma(A)$ is non-injective. If $B$ is a coarse-graining of $A$, i.e., if there is a non-injective function $g: \sigma(A) \rightarrow \sigma(B)$ such that $B = g(A)$, we say that $A$ is a \textit{fine-graining} of $B$. A degenerate observable $A$ is thus a coarse-graining of $B$ for any $B \in \langle\mathfrak{B}\rangle$, where $\mathfrak{B}$ is a measurement basis for $A$, and for this reason we say that a degenerate observable can always be \textit{fine-grained}. On the other hand, a nondegenerate observable $A$ can never be fine-grained, since $\vert \sigma(A)\vert = \text{dim}(H)$; in particular, if $\mathfrak{B}$ is the unique measurement basis of a nondegenerate observable $A$ and $B \in \langle \mathfrak{B} \rangle$, then the function $g: \sigma(B) \rightarrow \sigma(A)$ satisfying $A = g(B)$ is injective. We can thus say that, when we measure a degenerate observable $A$ in a basis $\mathfrak{B}$, we are not extracting as much information from the system as this basis allows us to do. In fact, $A$ is a coarse-graining of $B$ for any $B \in \langle \mathfrak{B} \rangle$, and, as we will prove in proposition \ref{prop: final}, measuring $A$ in this basis consists in measuring some $B \in \langle \mathfrak{B} \rangle$ and post-processing the resulting value via $g$, where $A = g(B)$. Hence, accessing only the outcomes of $A$ is equivalent to accessing only the partition $\boldsymbol{\Delta}_{g}$ of $\sigma(B)$, which in turn means that we have lost the capacity of distinguishing some objective $B$-events (see section \ref{sec: collapse}), which are precisely the objective events that take place when we measure $A$ through a post-processing of $B$. On the other hand, if $A$ is nondegenerate, then there is a one-to-one correspondence between $A$-events and $B$-events and, consequently, no information is lost in accessing only the outcomes of $A$. It reinforces the aforementioned  similarity between mixed states and degenerate observables.

As we see it, the multiplicity of measurement bases for a degenerate observable resembles the variety of convex decompositions of a mixed state, and the fact that a nondegenerate observable has a unique basis is comparable to the unique convex decomposition of a pure state $\rho$, namely $\rho = 1\rho + 0 \rho$. In Spekkens' contextuality \cite{SpekkensContextuality}, distinct convex combinations of a mixed state $\rho$ are associated with distinct preparation procedures for the state $\rho$ \cite{SpekkensContextuality}. In proposition \ref{prop: final} we show that distinct measurement bases for a degenerate observable $A$ are associated with distinct measurement procedures for $A$. Thus, the dependence on contexts that appears in definition \ref{def: contextual collapse} resembles  Spekkens' notion of contextuality in quantum theory. 




The mapping $(\Delta,A, \mathfrak{B}) \mapsto T_{(\Delta,A,\mathfrak{B})}$ defines an equivalence relation $\sim$ in the collection $\mathbb{E}$ of all measurement events in $H$, and, by construction, equivalent events update the state of the system in the same way. Furthermore, it is easy to see that equivalent events are associated with the same orthogonal projection, thus, in particular, they are equally likely w.r.t. any state. In fact, let $(\Delta,A, \mathfrak{A})$ and $(\widetilde{\Delta},B, \mathfrak{B})$ be equivalent events, and write 
$E_{\Delta} \equiv \chi_{\Delta}(A)$ and  $F_{\widetilde{\Delta}} \equiv \chi_{\widetilde{\Delta}}(B)$. 
Then
\begin{align}
    T_{(\Delta,A,\mathfrak{B})}(\emptyset) = \frac{E_{\Delta}}{\tr(E_{\Delta})},
    \\
    T_{(\widetilde{\Delta},B,\mathfrak{B})}(\emptyset) = \frac{F_{\widetilde{\Delta}}}{\tr(F_{\widetilde{\Delta}})},
\end{align}
where $\phi \equiv \frac{1}{n}\mathds{1}$. Since $T_{(\Delta,A,\mathfrak{B})} = T_{(\widetilde{\Delta},B,\mathfrak{B})}$, we have $\frac{E_{\Delta}}{\tr(E_{\Delta})} = \frac{F_{\widetilde{\Delta}}}{\tr(F_{\widetilde{\Delta}})}$, which in turn is equivalent to $E_{\Delta} = F_{\widetilde{\Delta}}$. It guarantees that we do not need to take contexts into account when evaluating probabilities of events: we can refer the the probability of an event $(\Delta,A)$ without specifying the measurement basis. Note that it has been  assumed in equation \ref{eq: subjective T}.

According to our previous definition of measurement event, an event $(\Delta, A)$ canonically defines a subspace $U_{(\Delta,A)}$ of $H$, namely the subspace associated with $\chi_{\Delta}(A)$. According to the current definition, a measurement event $(\Delta, A,\mathfrak{B})$ defines a basis of $U_{(\Delta,A)}$. According to the previous definition, two measurement events are equivalent if and only if they are associated with the same subspace $U_{(\Delta,A)}$; according to the current definition, two events are equivalent if and only if they are associated not only with the same subspace $U_{(\Delta,A)}$, but also with the same basis of $U_{(\Delta,A)}$. This is why, according to the current definition, equivalent events update the state of the system in the same way. 

Let $A$ be any observable, and let $g(A)$ be a function of $A$ according to the functional calculus. Then any measurement basis for $A$ is also a measurement basis for $g(A)$. Measuring $g(A)$ in a basis  for $A$ is equivalent to measuring $A$ and post-processing the resulting value via $g$, in the sense that both procedures not only satisfy equation \ref{eq: KS definition} but also update the state of the system in precisely the same way. In fact, let $\mathfrak{B} = \{E_{i}\}_{i=1}^{n}$ be a measurement basis for $A$. Recall that $\sigma(g(A)) = \{g(\af_{i}): i=1,\dots,n\}$, where $E_{i}A = \af_{i}E_{i}$ and $\sigma(A) = \{\af_{i}: i=1,\dots,n\}$. Then, given any Borel set $\Delta$,
\begin{align}
    T_{(\Delta,g(A),\mathfrak{B})}(\rho) &=  \sum_{\substack{i=1\\ g(\af_{i}) \in \Delta}}^{n}\frac{E_{i}\rho E_{i}}{\tr(\rho \chi_{\Delta}(g(A)))}  = \sum_{\substack{i=1 \\ \af_{i} \in g^{-1}(\Delta)}}^{n}\frac{E_{i}\rho E_{i}}{\tr(\rho\chi_{g^{-1}(\Delta)}(A))}
    \\
    &= T_{(g^{-1}(\Delta),A,\mathfrak{B})}(\rho),
\end{align}
It proves the following proposition.

\begin{proposition}[Post-processing]\label{prop: final}
Let $A$ be a selfadjoint operator in a $n$-dimensional Hilbert space $H$, and let $g(A)$ be a function of $A$, defined according to the functional calculus. Then $g(A)$ is a logically possible observable representing an experimental post-processing of $A$ via $g$, by which we mean that the following conditions are satisfied.
\begin{itemize}
    \item[(a)] For any state $\rho$, the probability measure defined by $B$ matches the probability measure defined by an experimental post-processing of $A$ via $g$, i.e.,
    \begin{align}
        P_{\rho}( \ \cdot \ , g(A)) = P_{\rho}(g^{-1}( \ \cdot \ ), A).
    \end{align}
    \item[(b)] For any Borel set $\Delta$ and any measurement basis $\mathfrak{B}$ for $A$, the events $(\Delta,g(A),\mathfrak{B})$ and $(g^{-1}(\Delta),A,\mathfrak{B})$ update the state of the system in the same way, that is to say,
    \begin{align}
        T_{(\Delta,g(A),\mathfrak{B})} = T_{(g^{-1}(\Delta),A, \mathfrak{B})}.
    \end{align}
\end{itemize}
\end{proposition}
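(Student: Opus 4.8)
The plan is to treat the two items separately, with item (a) being essentially a restatement of the functional-calculus identity already invoked earlier in this section and item (b) carrying the real content. For item (a), I would expand the Born rule, writing $P_{\rho}(\Delta, g(A)) = \tr(\rho\, \chi_{\Delta}(g(A)))$, and then appeal to the functional calculus (together with the spectral mapping theorem) to replace $\chi_{\Delta}(g(A))$ by $\chi_{g^{-1}(\Delta)}(A)$; this immediately yields $P_{\rho}(\Delta, g(A)) = \tr(\rho\,\chi_{g^{-1}(\Delta)}(A)) = P_{\rho}(g^{-1}(\Delta), A)$, which is exactly equation \ref{eq: KS definition}. Since this identity is model-independent and was already established when discussing the equivalence relation on $\mathbb{E}$, item (a) requires no new work.

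The heart of the proof is item (b), and the key observation is that \emph{any measurement basis $\mathfrak{B}$ for $A$ is automatically a measurement basis for $g(A)$}. Writing $\mathfrak{B} = \{E_{i}\}_{i=1}^{n}$ with $E_{i}A = \af_{i}E_{i}$ and $\sigma(A) = \{\af_{i}\}$, the orthogonality and completeness relations $E_{i}E_{j} = \delta_{ij}E_{i}$ are inherited verbatim, so the only thing to verify is the eigenvalue relation for $g(A)$. This follows from the functional calculus: each $E_{i}$ projects onto an eigenvector $\phi_{i}$ of $A$ with eigenvalue $\af_{i}$, and such a vector is automatically an eigenvector of $g(A)$ with eigenvalue $g(\af_{i})$, so $E_{i}\,g(A) = g(\af_{i})E_{i} = g(A)\,E_{i}$. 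Hence $\mathfrak{B}$ meets the defining conditions of a measurement basis for $g(A)$, with $\sigma(g(A)) = \{g(\af_{i}): i=1,\dots,n\}$.

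With $\mathfrak{B}$ recognized as a common measurement basis, I would then plug both events into the defining formula \ref{eq: subjective T}. For a state $\rho$ with $\tr(\rho\,\chi_{\Delta}(g(A))) \neq 0$ this presents $T_{(\Delta,g(A),\mathfrak{B})}(\rho)$ as a sum of the terms $E_{i}\rho E_{i}$ over the indices $i$ with $g(\af_{i}) \in \Delta$, normalized by $\tr(\rho\,\chi_{\Delta}(g(A)))$, and likewise $T_{(g^{-1}(\Delta),A,\mathfrak{B})}(\rho)$ as the same kind of sum over the indices with $\af_{i} \in g^{-1}(\Delta)$, normalized by $\tr(\rho\,\chi_{g^{-1}(\Delta)}(A))$. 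The two index sets coincide because $g(\af_{i}) \in \Delta$ if and only if $\af_{i} \in g^{-1}(\Delta)$, and the two normalizations coincide because $\chi_{\Delta}(g(A)) = \chi_{g^{-1}(\Delta)}(A)$; therefore the two updated states are literally the same operator. The zero-probability case is handled by the convention in definition \ref{def: subjective collapse} and equation \ref{eq: subjective T}, under which both maps send $\rho$ to $0$ precisely when $\tr(\rho\,\chi_{\Delta}(g(A))) = \tr(\rho\,\chi_{g^{-1}(\Delta)}(A)) = 0$, so equality holds there as well.

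I do not expect a genuine obstacle: once $\mathfrak{B}$ is seen to be a measurement basis for both $A$ and $g(A)$, everything reduces to relabelling the summation index through $g$ and invoking $\chi_{\Delta}(g(A)) = \chi_{g^{-1}(\Delta)}(A)$. The only subtlety worth flagging is conceptual rather than technical: it is precisely the use of a \emph{fixed shared basis} that makes the collapse respect the functional relation here, in contrast with proposition \ref{proposition: decisive lemma}, where the basis-free collapse of definition \ref{def: subjective collapse} failed exactly when $g$ was non-injective because the off-diagonal cross terms $E_{\af}\rho E_{\af'}$ with $\af \neq \af'$ survived. Fixing $\mathfrak{B}$ amounts to fine-graining those degenerate blocks, which kills the cross terms and restores the analogy with classical post-processing.
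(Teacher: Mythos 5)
Your proposal is correct and follows essentially the same route as the paper: item (a) reduces to the identity $\chi_{\Delta}(g(A)) = \chi_{g^{-1}(\Delta)}(A)$ already established in section \ref{sec: conflict}, and item (b) is proved by noting that a measurement basis for $A$ is also one for $g(A)$ and then relabelling the index set in equation \ref{eq: subjective T}. Your explicit verification that $\mathfrak{B}$ satisfies the eigenvalue condition for $g(A)$, and your handling of the zero-probability case, are slightly more detailed than the paper's argument but do not change the approach.
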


Therefore, the ``context-dependent collapse postulate'' allows us to avoid theorem \ref{theorem: TTT} without rejecting Kochen and Specker's  view on functional relations. From this perspective, measuring $g(A)$ indeed consists in measuring $A$ and evaluating $g$ in the resulting value, as pointed out by Kochen and Specker \cite{KS1967}, and, in such a procedure, the state of the system is updated by $A$. An observable $B$ can eventually be a function $B=g(A)=h(C)$ of noncommuting observables $A,C$, thus we can measure $B$ by measuring $A$ and evaluating $g$ on the resulting value, or by measuring $C$ and evaluating $h$ on its resulting value. Since $[B,C] \neq 0$, these measurement procedures are distinct, and, according to definition \ref{def: contextual collapse}, distinct procedures update the state of the system in different ways. As we mentioned, this dependence on contexts is in agreement with Kochen-Specker theorem \cite{KS1967}. This is the solution for the conflict between functional relations and the collapse postulate that sounds more convincing to us.

\section*{Acknowledgments} \label{sec:acknowledgements}
I would like to thank Bárbara Amaral and Leonardo Santos for  helpful comments. This work was funded by  National Council for Scientific and Technological Development (CNPq).

\bibliographystyle{IEEEtran}
\bibliography{Bibliography}

\begin{thebibliography}{10}
\providecommand{\url}[1]{#1}
\csname url@samestyle\endcsname
\providecommand{\newblock}{\relax}
\providecommand{\bibinfo}[2]{#2}
\providecommand{\BIBentrySTDinterwordspacing}{\spaceskip=0pt\relax}
\providecommand{\BIBentryALTinterwordstretchfactor}{4}
\providecommand{\BIBentryALTinterwordspacing}{\spaceskip=\fontdimen2\font plus
\BIBentryALTinterwordstretchfactor\fontdimen3\font minus
  \fontdimen4\font\relax}
\providecommand{\BIBforeignlanguage}[2]{{%
\expandafter\ifx\csname l@#1\endcsname\relax
\typeout{** WARNING: IEEEtran.bst: No hyphenation pattern has been}%
\typeout{** loaded for the language `#1'. Using the pattern for}%
\typeout{** the default language instead.}%
\else
\language=\csname l@#1\endcsname
\fi
#2}}
\providecommand{\BIBdecl}{\relax}
\BIBdecl

\bibitem{arnol2013mathematical}
V.~I. Arnold, \emph{Mathematical methods of classical mechanics}.\hskip 1em
  plus 0.5em minus 0.4em\relax Springer Science \& Business Media, 2013,
  vol.~60.

\bibitem{doring2010thing}
A.~D{\"o}ring and C.~Isham, ```what is a thing?': Topos theory in the
  foundations of physics,'' in \emph{New structures for physics}.\hskip 1em
  plus 0.5em minus 0.4em\relax Springer, 2010, pp. 753--937.

\bibitem{IshamDoeringI}
\BIBentryALTinterwordspacing
A.~Döring and C.~J. Isham, ``A topos foundation for theories of physics: I.
  formal languages for physics,'' \emph{Journal of Mathematical Physics},
  vol.~49, no.~5, p. 053515, 2008. [Online]. Available:
  \url{https://doi.org/10.1063/1.2883740}
\BIBentrySTDinterwordspacing

\bibitem{KS1967}
\BIBentryALTinterwordspacing
S.~Kochen and E.~P. Specker, ``The problem of hidden variables in quantum
  mechanics,'' \emph{Journal of Mathematics and Mechanics}, vol.~17, no.~1, pp.
  59--87, 1967. [Online]. Available: \url{http://www.jstor.org/stable/24902153}
\BIBentrySTDinterwordspacing

\bibitem{isham1998topos}
C.~J. Isham and J.~Butterfield, ``Topos perspective on the kochen-specker
  theorem: I. quantum states as generalized valuations,'' \emph{International
  journal of theoretical physics}, vol.~37, no.~11, pp. 2669--2733, 1998.

\bibitem{landsman2017foundations}
K.~Landsman, \emph{Foundations of quantum theory: from classical concepts to
  operator algebras}.\hskip 1em plus 0.5em minus 0.4em\relax Springer Nature,
  2017.

\bibitem{norsen2017foundations}
T.~Norsen, \emph{Foundations of quantum mechanics}.\hskip 1em plus 0.5em minus
  0.4em\relax Springer, 2017.

\bibitem{MerminBell}
\BIBentryALTinterwordspacing
N.~D. Mermin, ``Hidden variables and the two theorems of john bell,''
  \emph{Rev. Mod. Phys.}, vol.~65, pp. 803--815, Jul 1993. [Online]. Available:
  \url{https://link.aps.org/doi/10.1103/RevModPhys.65.803}
\BIBentrySTDinterwordspacing

\bibitem{barata2006curso}
\BIBentryALTinterwordspacing
J.~C. Barata, ``Curso de f{\i}sica-matem{\'a}tica,'' \emph{\empty}, 2022.
  [Online]. Available:
  \url{http://denebola.if.usp.br/~jbarata/Notas_de_aula/capitulos.html}
\BIBentrySTDinterwordspacing

\bibitem{kadison1997fundamentals}
\BIBentryALTinterwordspacing
R.~Kadison and J.~Ringrose, \emph{Fundamentals of the Theory of Operator
  Algebras. Volume I}, ser. Fundamentals of the Theory of Operator
  Algebras.\hskip 1em plus 0.5em minus 0.4em\relax American Mathematical
  Society, 1997. [Online]. Available:
  \url{https://books.google.com.br/books?id=Q3J6TV6euVYC}
\BIBentrySTDinterwordspacing

\bibitem{doring2005kochen}
A.~D{\"o}ring, ``Kochen--specker theorem for von neumann algebras,''
  \emph{International Journal of Theoretical Physics}, vol.~44, no.~2, pp.
  139--160, 2005.

\bibitem{hannabuss1997introduction}
\BIBentryALTinterwordspacing
K.~Hannabuss, \emph{An Introduction to Quantum Theory}, ser. Oxford graduate
  texts in mathematics.\hskip 1em plus 0.5em minus 0.4em\relax Clarendon Press,
  1997. [Online]. Available:
  \url{https://books.google.com.br/books?id=9MJ1ngEACAAJ}
\BIBentrySTDinterwordspacing

\bibitem{nielsen_chuang_2010}
M.~A. Nielsen and I.~L. Chuang, \emph{Quantum Computation and Quantum
  Information: 10th Anniversary Edition}.\hskip 1em plus 0.5em minus
  0.4em\relax Cambridge University Press, 2010.

\bibitem{hall2013quantum}
B.~C. Hall, \emph{Quantum theory for mathematicians}.\hskip 1em plus 0.5em
  minus 0.4em\relax Springer, 2013, vol. 267.

\bibitem{wilde2013quantum}
M.~M. Wilde, \emph{Quantum information theory}.\hskip 1em plus 0.5em minus
  0.4em\relax Cambridge University Press, 2013.

\bibitem{SpekkensContextuality}
\BIBentryALTinterwordspacing
R.~W. Spekkens, ``Contextuality for preparations, transformations, and unsharp
  measurements,'' \emph{Phys. Rev. A}, vol.~71, p. 052108, May 2005. [Online].
  Available: \url{https://link.aps.org/doi/10.1103/PhysRevA.71.052108}
\BIBentrySTDinterwordspacing

\end{thebibliography}

\end{document}